\DeclareMathAlphabet{\mathrsfs}{U}{rsfs}{m}{n}
\DeclareMathAlphabet{\mathpzc}{OT1}{pzc}{m}{it}
\DeclareMathAlphabet{\matheus}{U}{eus}{m}{n}
\DeclareMathAlphabet{\mathbbold}{U}{bbold}{m}{n}
\newtheorem{observation}{Observation}
\newcommand{\ba}{\begin{eqnarray}}
\newcommand{\ea}{\end{eqnarray}}
\newcommand{\ban}{\begin{eqnarray*}}
\newcommand{\ean}{\end{eqnarray*}}
\newcommand{\Tr}{\operatorname{Tr}}
\newcommand{\ket}[1]{|#1\rangle}
\newcommand{\one}{\mathbbold{1}}
\definecolor{ngreen}{rgb}{0.2,0.6,0.2}
\newcommand{\SCHSH}{\mathcal{S}_{\mbox{\tiny CHSH}}}
\newcommand{\Stau}{\mathcal{S}_\tau}
\begin{document}

\title{Exploring the limits of quantum nonlocality with entangled photons}

\author{Bradley G. Christensen}
\affiliation{Department of Physics, University of Illinois at Urbana-Champaign, Urbana, IL 61801, USA}

\author{Yeong-Cherng Liang}
\affiliation{Department of Physics, National Cheng Kung University, Tainan 701, Taiwan}
\affiliation{Institute for Theoretical Physics, ETH Z\"urich, 8093 Zurich, Switzerland}

\author{Nicolas Brunner}
\affiliation{D\'epartement de Physique Th\'eorique, Universit\'e de Gen\`eve, 1211 Gen\`eve, Switzerland}

\author{Nicolas Gisin}
\affiliation{Group of Applied Physics, University of Geneva, CH-1211 Geneva 4, Switzerland}

\author{Paul G. Kwiat}
\affiliation{Department of Physics, University of Illinois at Urbana-Champaign, Urbana, IL 61801, USA}

\date{\today}

\begin{abstract}
Quantum nonlocality is arguably among the most counter-intuitive phenomena predicted by quantum theory. In recent years, the development of an abstract theory of nonlocality has brought a much deeper understanding of the subject. In parallel, experimental progress allowed for the demonstration of quantum nonlocality in a wide range of physical systems, and brings us close to a final loophole-free Bell test. Here we combine these theoretical and experimental developments in order to explore the limits of quantum nonlocality. This approach represents a thorough test of quantum theory, and could provide evidence of new physics beyond the quantum model. Using a versatile and high-fidelity source of pairs of polarization entangled photons, we explore the boundary of quantum correlations, present the most nonlocal correlations ever reported, demonstrate the phenomenon of more nonlocality with less entanglement, and show that non-planar (and hence complex) qubit measurements can be necessary to reproduce the strong qubit correlations that we observed. Our results are in remarkable agreement with quantum predictions.
\end{abstract}

\pacs{}
\maketitle

\section{Introduction}

Distant observers sharing a well-prepared entangled state can establish correlations which cannot be explained by any theory compatible with a natural notion of locality, as witnessed via a suitable Bell inequality violation \cite{bell}. Once viewed as marginal, nonlocality is today considered as one of the most fundamental aspects of quantum theory \cite{review,review2}, and represents a powerful resource in quantum information science, in particular in the context of the device-independent approach \cite{acin07,pironio,colbeck}. Experimental evidence is overwhelming, all major loopholes have been individually addressed, and the ultimate loophole-free Bell test is within reach \cite{christensen,giustina,hofmann,diamond}. While most Bell experiments performed so far \cite{aspect,tittel,weihs,rowe,ansmann} make use of the 
Clauser-Horne-Shimony-Holt (CHSH) Bell inequality \cite{chsh}, few exploratory works considered Bell tests in the multipartite setting \cite{pan,lavoie,erven}, or for high-dimensional systems \cite{thew,zeilinger,padgett}.

Quantum nonlocality is, however, a much richer phenomenon, explored in recent years through the development of a generalized theory of nonlocality \cite{PR,barrett,review}. The theory aims at characterizing correlations satisfying the no-signaling principle (hence not in direct conflict with relativity). Remarkably, there exist no-signaling correlations which are stronger than any correlations realizable in quantum theory \cite{PR}. The most famous example here is the maximally nonlocal Popescu-Rohrlich (PR) box. Recent works showed that such super-quantum correlations may have implausible consequences from an information-theoretic point of view \cite{vanDam,IC,LO}. Also, a more physics-based approach relying on the concept of macroscopic locality has been developed \cite{ML,rohrlich,gisin}. From these various approaches, it appears unlikely that correlations exist in nature that could violate the CHSH-Bell inequality more than what is allowed in quantum theory, that is, one recovers the Tsirelson bound \cite{tsirelson}. However, these results do not exclude stronger-than-quantum correlations for other Bell inequalities \cite{Allcock2009}, some of which we shall investigate here. This led to the intriguing concept of {\it almost quantum} correlations \cite{almostQ}. Importantly, these developments provide a fresh perspective on the foundations of quantum theory (see e.g., \cite{popescu14} for a recent review). 

\begin{figure}[b]
\includegraphics[width=\columnwidth]{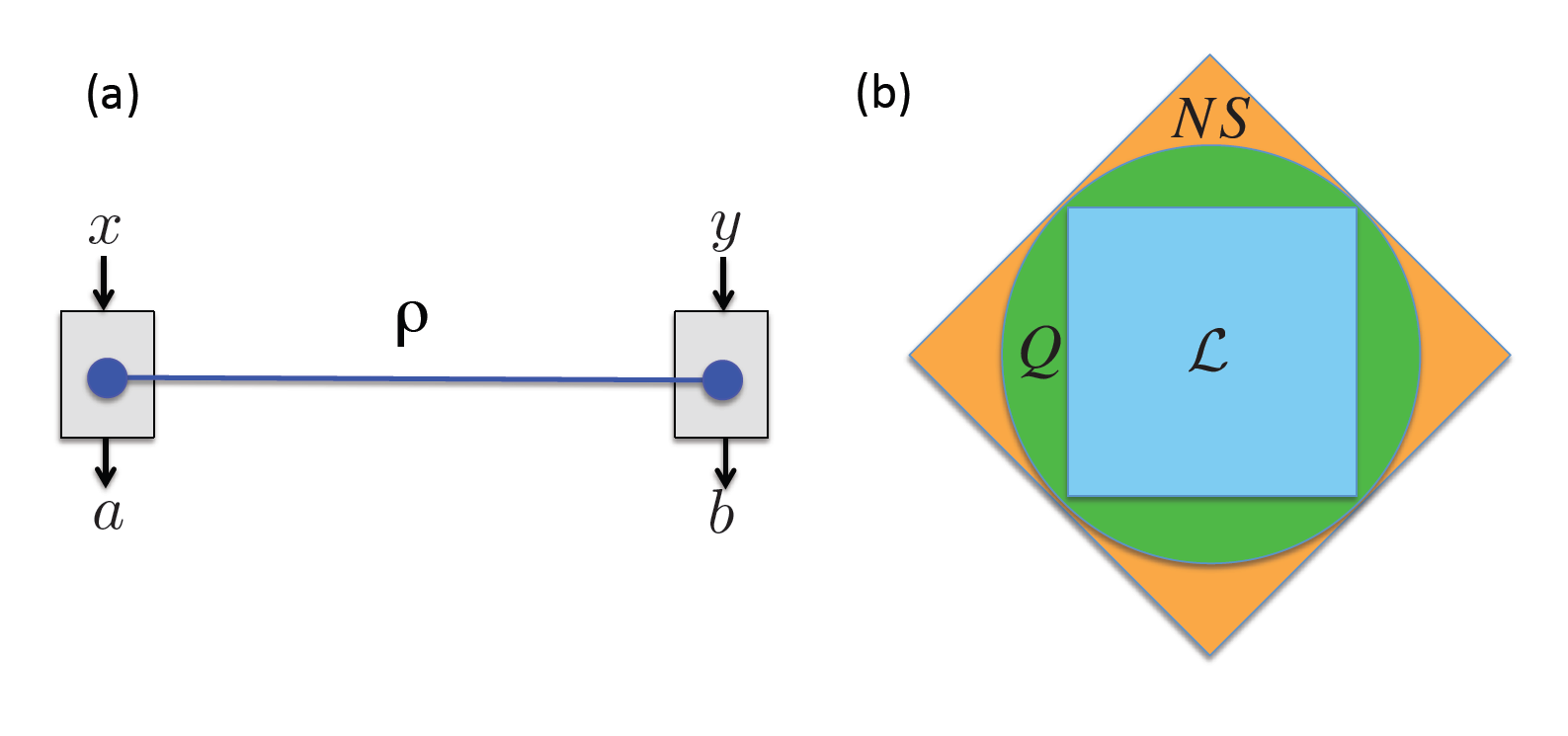}
 \caption{(a) Bell test scenario. Alice and Bob perform ``black box" measurements on a shared (quantum) state $\rho$. The experiment is characterized by the data $\{p(a,b|x,y)\}$, i.e., a set of conditional probabilities for each pair of measurement outputs ($a$ and $b$) given measurement settings $x$ and $y$. Based on the data $p(a,b|x,y)$, Bell inequalities (see Eq.~\eqref{BI}) can be tested. (b) Geometrical representation of non-signaling correlations. The set of local ($\mathcal{L}$), quantum ($\mathcal{Q}$), and non-signaling ($\mathcal{NS}$) distributions are projected onto a plane, where the following inclusion relations are clear: $\mathcal{L} \subset \mathcal{Q} \subset \mathcal{NS}$. }\label{bellpolytope}
\end{figure}

All of these recent developments have come from a theoretical perspective; here we begin to experimentally explore the limits of quantum nonlocality using a high-quality entangled-photon source. We perform a wide range of Bell tests. In particular we probe the boundary of quantum correlations in the CHSH Bell scenario. We demonstrate the phenomenon of more nonlocality with less entanglement \cite{brunner,liang11,Vidick,Palazuelos}. Specifically, using weakly entangled states, we observe (i) nonlocal correlations which could provably not have been obtained from any maximally entangled state, and (ii) nonlocal correlations which could not have been obtained using a single PR box. Moreover, we observe the most nonlocal correlations ever reported, i.e. featuring the smallest local content \cite{EPR2}, and provide the strongest bounds on the outcome predictability in a general class of physical theories consistent with the no-signaling principle \cite{renato}. Finally, we observe nonlocal correlations which certify the use of complex qubit measurements. All our results are in remarkable agreement with quantum predictions.

\section{Concepts and notations}

First, we will introduce the concepts and notations for generalized Bell tests, and then present the experiments. Consider two separated observers, Alice and Bob, performing local measurements on a shared quantum state $\rho$. Alice's choice of measurement settings is denoted by $x$ and the measurement outcome by $a$. Similarly, Bob's choice of measurement is denoted by $y$ and its outcome by $b$. The experiment is thus characterized by the joint distribution
\ba \label{Q}
p(a,b|x,y) = \Tr(\rho M_{a|x} \otimes M_{b|y}),
\ea
where $M_{a|x}$ ($M_{b|y}$) represents the measurement operators of Alice (Bob); see Fig. 1(a). In his seminal work, Bell introduced a natural concept of locality, which assumes that the local measurement outcomes only depend on a pre-established strategy and the choice of local measurements \cite{bell}. Specifically, a distribution is said to be local if it admits a decomposition of the form
\ba \label{local}
p(a,b|x,y) = \int d \lambda\, q(\lambda) p(a|x,\lambda) p(b|y,\lambda) ,
\ea
where $\lambda$ denotes a shared local (hidden) variable, distributed according to the density $q(\lambda)$, and Alice's probability distribution---once $\lambda$ is given---is notably independent of Bob's input and output (and vice versa). For a given number of settings and outcomes the set of local distributions forms a polytope $\mathcal{L}$, the facets of which correspond to Bell inequalities \cite{review}. These inequalities can be written as 
\ba \label{BI} 
\mathcal{S} = \sum_{a,b,x,y} \beta_{a,b,x,y} p(a,b|x,y) \,\, \stackrel{\mathcal{L}}{\le} \,\, L,
\ea
where $\beta_{a,b,x,y}$ are integer coefficients, and $L$ denotes the local bound of the inequality---the maximum of the quantity $\mathcal{S}$ over distributions from $\mathcal{L}$, i.e., of the form \ref{local}. 

By performing judiciously chosen local measurements on an entangled quantum state, one can obtain distributions \ref{Q} which violate one (or more) Bell inequalities, and hence do not admit a decomposition of the form \ref{local}. Therefore, the set of quantum correlations $\mathcal{Q}$, i.e., those admitting a decomposition of the form \ref{Q}, is strictly larger than the local set $\mathcal{L}$. 
Characterizing the quantum set $\mathcal{Q}$, or equivalently the limits of quantum nonlocality, turns out to be a hard problem \cite{hierarchy1,hierarchy2}. In their seminal work, Popescu and Rohrlich \cite{PR} asked whether the principle of no-signaling (or relativistic causality) could be used to derive the limits of $\mathcal{Q}$ and surprisingly found this not to be the case. Specifically, they proved the existence of no-signaling correlations which are not achievable in quantum theory, the so-called ``PR box" correlations. Therefore, the set of no-signaling correlations, denoted by $\mathcal{NS}$, is strictly larger than $\mathcal{Q}$, and we get the relation $\mathcal{L} \subset \mathcal{Q} \subset \mathcal{NS}$ (see Fig. 1(b)). The study and characterization of the boundary between $\mathcal{Q}$ and $\mathcal{NS}$ is today a hot topic of research \cite{popescu14}. A central question is whether the limits of quantum nonlocality could be recovered from a simple physical principle (i.e., is it possible to derive quantum mechanics from just causality and another axiom).

\section{Experimental setup}

Here, we experimentally explore the limits of quantum nonlocality using a high quality source of entangled photons \cite{christensen}. Our entanglement source consists of a 355-nm pulsed laser focused onto two orthogonal nonlinear BiBO crystals to produce polarization-entangled photon pairs at 710 nm, via spontaneous parametric down-conversion: the first (second) crystal has an amplitude to create horizontal (vertical) polarized photon pairs, which interfere to produce the entangled state \cite{kwiatwaks}  (see Fig.~\ref{fig:expdia}).  Using wave plates to control the polarization of the pump beam, we create polarization entangled states with arbitrary degree of entanglement
\ba \ket{\psi_{\theta}} = \cos{\theta} \ket{H,H} + \sin{\theta} \ket{V,V}. \ea

In addition to the ability to precisely tune the entangled state of the source, which is crucial for many of the Bell tests we perform, we also achieve extremely high state quality.  To do so, we pre-compensate the temporal decoherence from group-velocity dispersion in the down-conversion crystals with a BBO crystal \cite{Rangarajan2009}, resulting in an interference visibility of $0.997\pm 0.0005$ in all bases.  The high state quality (along with the capability of creating a state with nearly any degree of entanglement) allows us to make measurements very close to the quantum mechanical bound in a large array of different Bell tests.

\begin{figure}[h]
\includegraphics[width=\columnwidth]{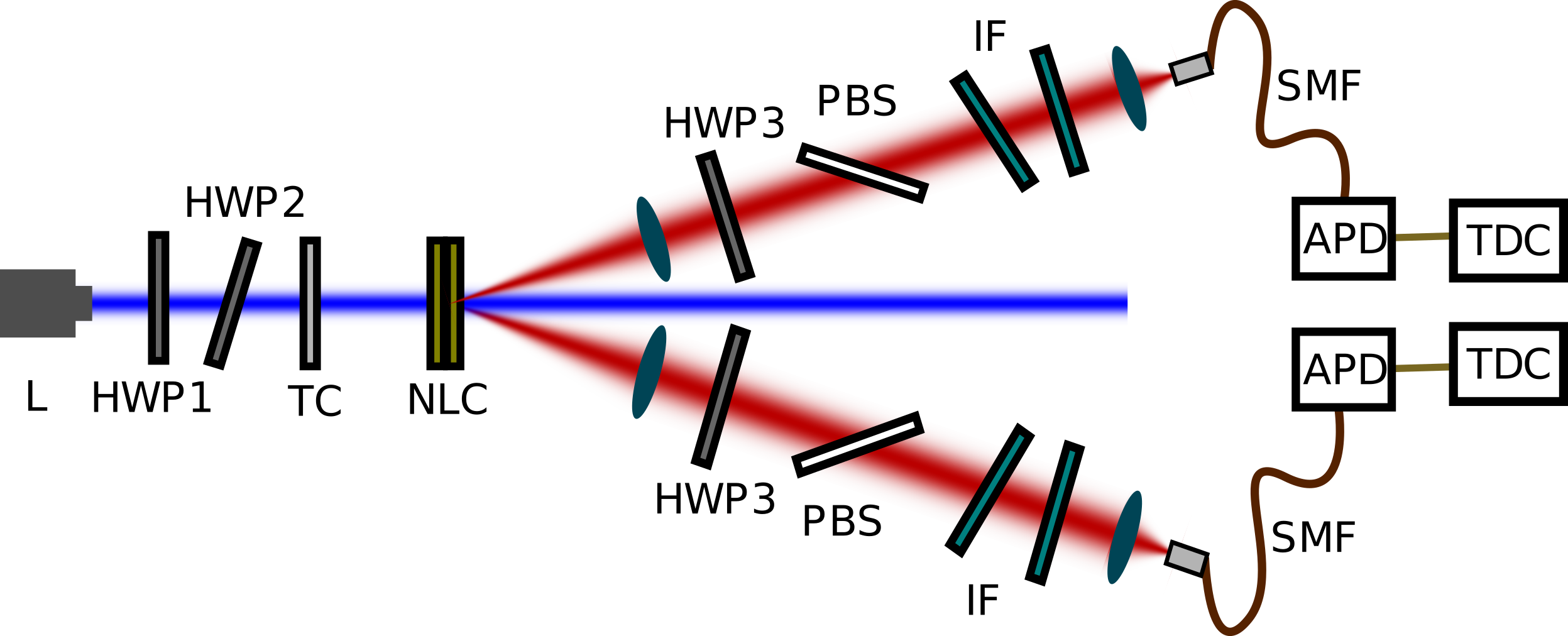} 
\caption{A diagram of the entanglement source. The high-power laser (L) is prepared in a specific polarization state (depending on the Bell test) by two half-wave plates (HWP1 and HWP2).  We pre-compensate for the temporal decoherence (arising from the group velocity dispersion in the downconversion crystals) by passing the laser through a crystal (TC) designed to have the opposite group velocity dispersion.  Passing the pump through a pair of orthogonal nonlinear crystals (NLC) produces the entangled photons. The measurements are performed using a motorized half-wave plate (HWP3) and a polarizing beam splitter (PBS). We then spectrally filter (IF) the photons to limit the collected bandwidth to 20 nm, as well as spatially filter the photons using a single-mode fiber (SMF) to remove any spatial decoherence. Finally, the photons are detected using avalanche photodiodes (APD), the events of which are recorded on a time-to-digital converter (TDC) and saved on a computer for analysis.}  \label{fig:expdia}
\end{figure}

For the Bell tests, the local polarization measurements are implemented using a fixed Brewster-angle polarizing beam splitter, preceded by an adjustable half-wave plate, and followed by single-photon detectors to detect the transmitted photons. This allows for the implementation of arbitrary projective measurements of the polarization, represented by operators $A = \vec{a} \cdot \vec{\sigma}$ and $B = \vec{b} \cdot \vec{\sigma}$, where $\vec{a} $ and $\vec{b} $ are the Bloch vectors and $\vec{\sigma} = (\sigma_x,\sigma_y,\sigma_z)$ denotes the vector of Pauli matrices. Measurement outcomes are denoted by $a=\pm1$ and $b=\pm1$, where in our experiments the $-1$ outcome is measured by projecting onto the orthogonal polarization.  To remove any potential systematic loopholes (e.g., seemingly better results due to laser power fluctuations), we measure each Bell inequality multiple times, where the measurements settings are applied in a different randomized order each time. Finally, to ensure the validity of the results, we do not perform any post-processing of the data (e.g., accidental subtraction).

\section{Experiments and results}


We start our investigation by considering the simplest Bell scenario, featuring two binary measurements each for Alice and Bob. The set of local correlations, i.e., of the form \ref{local}, is fully captured by the CHSH inequality \cite{chsh}: 
\ba 
\SCHSH = E_{11}+E_{12}+E_{21}-E_{22} \,\, \stackrel{\mathcal{L}}{\leq}\,\,  2,
\ea
where $E_{xy} \equiv p(a=b|x,y)-p(a \neq b|x,y) $ denotes the correlation function. Quantum correlations can violate the above inequality up to $\SCHSH= 2\sqrt{2}$, the so-called Tsirelson bound \cite{tsirelson}. More generally, quantum correlations must also satisfy the following family of inequalities 
\ba  \label{circle} 
	\SCHSH \cos{\theta} +   \SCHSH' \sin{\theta} \,\, \stackrel{\mathcal{Q}}{\leq}\,\,  2\sqrt{2},
\ea
parametrized by $\theta \in [0,2\pi]$, and where $\SCHSH'=-E_{11}+E_{12}+E_{21}+E_{22}$ is a different representation (or symmetry) of the CHSH expression.  
Notably, the above quantum Bell inequalities are tight, in the sense that quantum correlations can achieve $2\sqrt{2}$ for any $\theta \in [0,2\pi]$. Specifically, inequality \eqref{circle} can be saturated by performing appropriate local measurements on a maximally entangled state $\ket{\psi_{\pi/4}}$ (see App.~\ref{app:QB}). Therefore, the set of quantum correlations $\mathcal{Q}$ forms a circle (of radius $2\sqrt{2}$) in the plane defined by $\SCHSH$ and $\SCHSH'$. Fig. \ref{fig:CHSHcombined} presents the experimental results which confirm these theoretical predictions with high accuracy.  To make the measurements, we kept the entangled state fixed and varied the settings for 180 different values of $\theta$.  The average radius of our measurements was $2.817$, with a standard deviation of $0.006$, close to the quantum boundary of $2\sqrt{2}$ for the projection onto the $\SCHSH'$ and $\SCHSH$ axes.

\begin{figure*}[t]
\includegraphics[width=1.7\columnwidth]{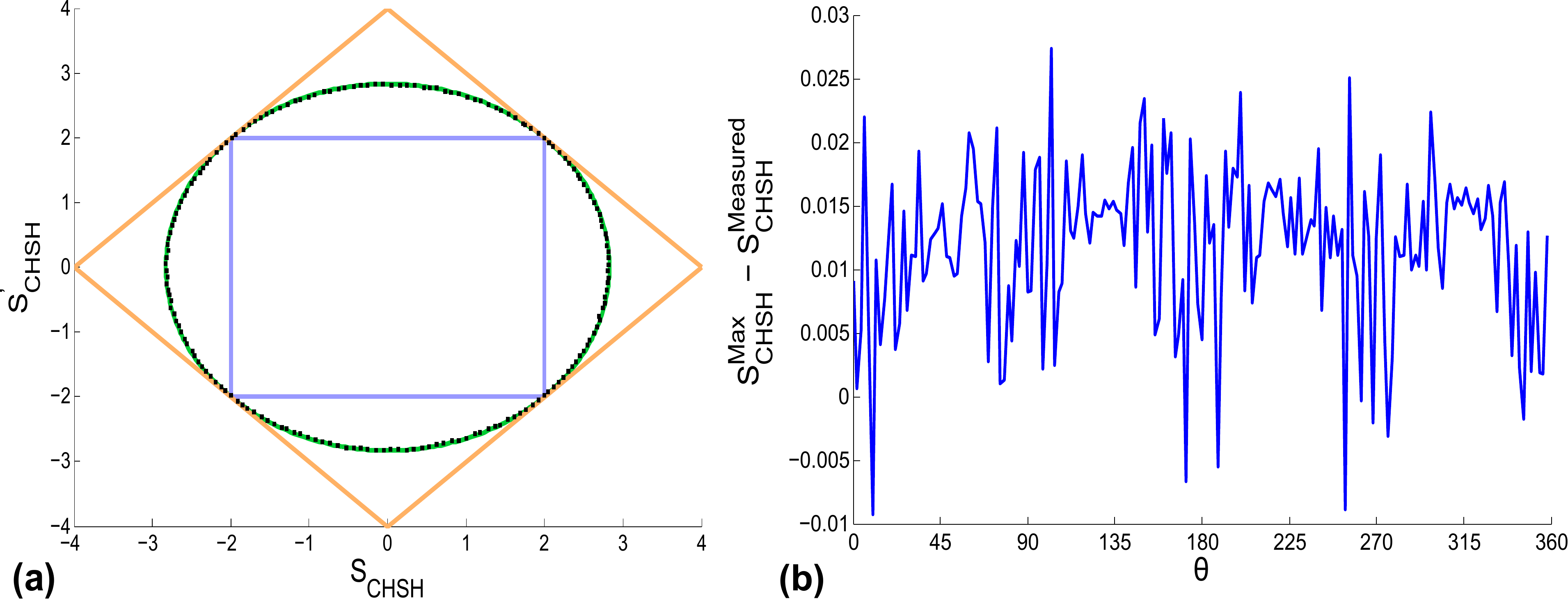} 
\caption{Testing the boundary of quantum correlations in the simplest Bell test scenario. (a) Plot of the experimental measurements of the curve Eq.~\eqref{circle}.  Here, local correlations ($\mathcal{L}$) form the inner blue square, no-signaling correlations form the outer orange square (the vertices represent the PR-box and its symmetries), and quantum correlations form the green circle; the black dots are the 180 measured data points, all of whose error bars lie within the thickness of the dot. (b) A comparison of the analyzed data with the quantum mechanical maximum ($2 \sqrt{2}$).  Here, $\theta$ is defined in Eq.~\eqref{circle}, and corresponds to rotating around the circle in Fig.~\ref{fig:CHSHcombined}(a). The vertical axis is the distance from $2 \sqrt{2}$ of the root mean square of $\SCHSH'$ and $\SCHSH$ (i.e., the radius of the data point at a given $\theta$). Plotted values greater than zero correspond to measured values less than $2 \sqrt{2}$.}\label{fig:CHSHcombined}
\end{figure*}

It turns out, however, that the complete boundary of $\mathcal{Q}$ cannot be fully recovered by considering only maximally entangled states. That is, there exist sections of the no-signaling polytope where the quantum boundary can only be reached using partially entangled states \cite{liang11} . Specifically, consider the projection plane defined by the parameters $\SCHSH$ and $-E^A_1-E^B_1$, where  $E^A_1= \sum_{a=\pm1} a \,\, p(a|x=1)$ denotes Alice's marginal (similarly for Bob's marginal $E^B_1$). In order to find the quantum boundary in this plane, we consider the family of Bell inequalities 
\ba\label{Ineq:TiltedCH}
	\mathcal{S}_{\tau} = \SCHSH  +  2(1-\tau)[E^A_1+E^B_1] \,\, \stackrel{\mathcal{L}}{\leq}\,\,  2(2\tau-1) ,
\ea
with $1\leq \tau \leq 3/2$. For $\tau=1$, we recover CHSH, while for $1< \tau < 3/2$ the inequality has the peculiar feature that the maximal quantum violation can only be obtained using partially entangled states \cite{liang11}. Moreover, for $1/\sqrt{2}+1/2 \leq \tau \leq \frac{3}{2}$, the inequality can \textit{never} be violated using a maximally entangled state of any Hilbert space dimension (see App.~\ref{App:MES}). This illustrates the fact that weak entanglement can give rise to nonlocal correlations which cannot be reproduced using strong entanglement. We achieved violations of the above inequalities (for several values of the parameter $\tau$) extremely close to the theoretically predicted maximum, by adjusting the degree of entanglement and using the corresponding settings; see Fig.~\ref{fig:Tiltedcombined}. For instance, tuning our source to produce weakly entangled states, we obtain clear violation of the inequality $\mathcal{S}_{\tau = 1.300} \stackrel{\mathcal{L}}{\le} 3.2$, where we measure $\mathcal{S}_{\tau} = 3.258 \pm 0.002$, which is impossible using maximally entangled states. Our results thus clearly illustrate the phenomenon of `more nonlocality with less entanglement'~\cite{Vidick,Palazuelos,liang11}.

\begin{figure*}[t]
\includegraphics[width=1.7\columnwidth]{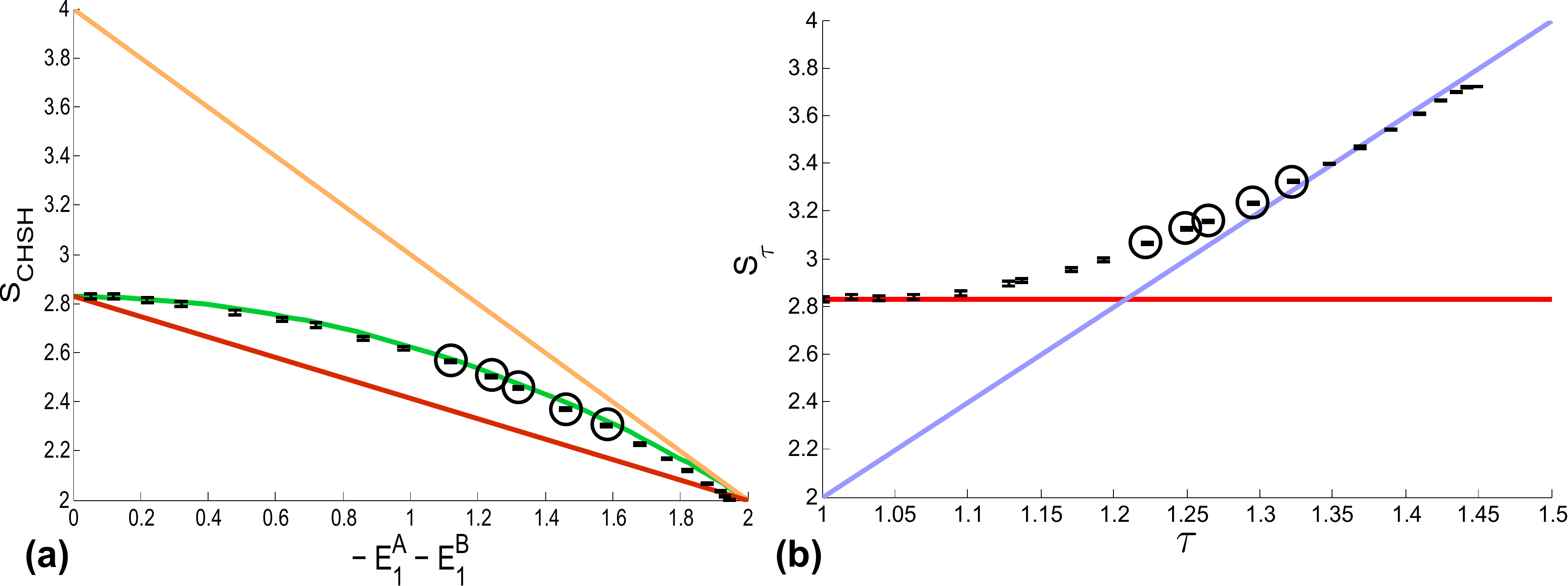}
\caption{Testing the boundary of quantum correlations with the tilted Bell inequality. (a) A plot of the measured values for a projection where the quantum boundary can only be attained using partially entangled states. The orange line is the boundary for no-signaling correlations (the PR box sitting at the top), and the red line is the boundary of the set of correlations achievable by a maximally entangled state (see App.~\ref{App:MES}), whereas the horizontal axis at $\SCHSH=2$ coincides with the boundary of the local set $\mathcal{L}$. The green curve represents the quantum boundary, with the black points corresponding to measured data points. For large values of $- E^A_1 - E^B_1$(corresponding to less entangled states), the system becomes increasingly sensitive to system noise (i.e., slight state-creation and measurement imperfections), resulting in the measured values deviating slightly from the quantum curve.  (b) A plot of the measured values for the tilted Bell inequalities. The red line is the bound of maximally entangled sates, the blue line is the local bound, and the black points are the analyzed data.  The red and blue lines cross at $1/\sqrt{2} + 1/2$, where maximally entangled states can no longer violate a tilted Bell inequality. Here, for the measured points up to $\tau = 1.323$, we see a value of $\mathcal{S}_{\tau}$ at least three standard deviations above the local bound; notably, we have violations for $\tau = 1.223, 1.250, 1.265, 1.296,$ and $1.323$ (circled data points in both plots), as well as $\tau = 1.300$ (see text and App.~\ref{App:Data}), none of which are possible for maximally entangled states in any dimension, implying that with less entanglement, we have more nonlocality.}\label{fig:Tiltedcombined}
\end{figure*}

In the remainder of the paper, we go beyond the CHSH scenario and consider Bell inequalities featuring $n>2$ binary-outcome measurements per observer. This will allow us to investigate other aspects of the phenomenon of quantum nonlocality. We start by considering the family of chained Bell inequalities \cite{pearle, BC}
\begin{align}\label{chainedineq}
	I_n=\sum_{a,b=\pm1}\bigg[&p(a=b|n,1) + p(a\neq b|n,n)  \nonumber \\ 
	&+\sum_{x=1}^{n-1}\sum_{y=x}^{x+1} p(a\neq b|x,y)\bigg] \stackrel{\mathcal{L}}{\ge} 1 .
\end{align}
Using a maximally entangled state $\ket{\psi_{\pi/4}}$, quantum theory allows one to obtain values up to $I_n=n(1-\cos\frac{\pi}{2n})$. Note that, as $n$ increases, the quantum violation approaches the bound imposed by the no-signaling principle, namely $I_n=0$  (here given by the algebraic minimum of $I_n$). Using our setup we obtain violations of the chained inequality up to $n=45$.  Because $I_n$ becomes increasingly sensitive to any noise in the system as $n$ increases, we found the strongest violation at $n=18$, with a value of $I_{18} = 0.126\pm 0.001$, see Fig.~\ref{fig:chained}.  For comparison, the previous best measurement of $I_n$ was $I_7 = 0.324 \pm 0.0027$ \cite{stuart}.

\begin{figure}[t]
\includegraphics[width=\columnwidth]{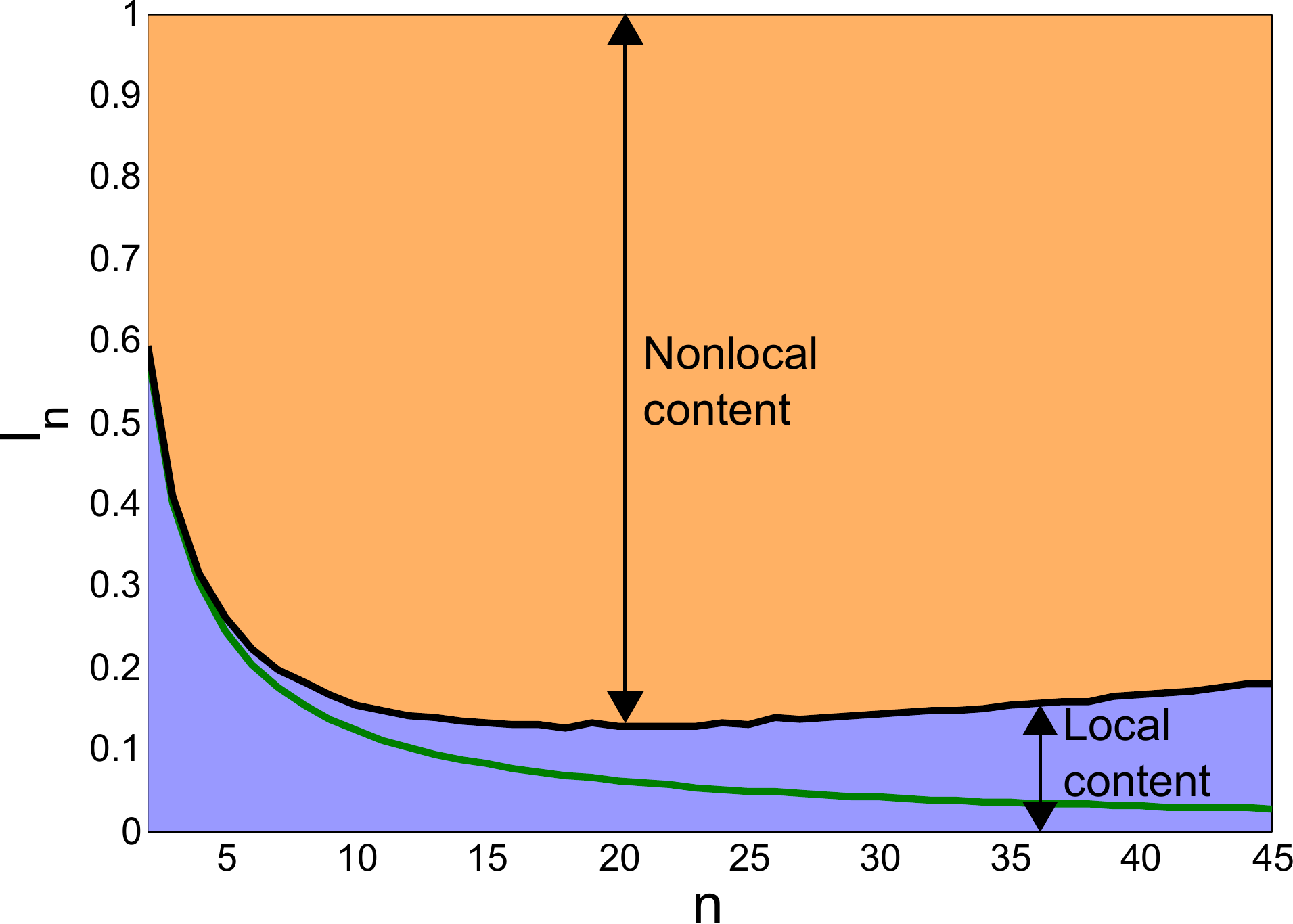} 
\caption{\label{fig:chained} A plot of the measured chained Bell inequality values for $n=2$ to $n=45$.  Here, the local limit is $I_n = 1$ and the no-signaling limit is $I_n = 0$.  The quantum boundary in this case is the green line; our measured Bell inequality values are connected by the black line, with the error bars lying within the thickness of the line.  The local content for a given $n$ is represented by distance from $0$ to the measured $I_n$ value (black line), which is colored blue, and the nonlocal content is the distance from the measured value to $1$, colored orange.  As the value of $I_n$ approaches 0, the correlations present in the system match those of a PR box---if $I_n = 0$ were measured, the system would require the use of a PR box for every measurement.  Our measured points deviate from the quantum boundary due to the 0.3\% noise from imperfect state preparation, which becomes more noticeable with larger number of measurements (e.g., $I_{45}$ requires 360 specific measurements along the Bloch sphere).}
\end{figure}

These violations have interesting consequences. First, they allow us to put strong lower bounds on the nonlocal content of the observed statistics ${\bf p}_{obs} = \{p_{obs}(ab|xy)\}$. Following the approach of Ref. \cite{EPR2}, we can write the decomposition
\ba {\bf p}_{obs} = (1-q) {\bf p}_{L} + q\, {\bf p}_{NS}, \ea
where ${\bf p}_{L}$ is a local distribution (inside $\mathcal{L}$) and ${\bf p}_{NS}$ is a no-signaling distribution (achieved, e.g., via PR boxes), and then minimize $q \,\in[0,1]$ over any such decomposition. The minimal value $q_{min}$ is then the nonlocal content of ${\bf p}_{obs} $, and can be viewed as a measure of nonlocality.  That is, we can think of $q_{min}$ as being the likelihood that some nonlocal resource (e.g., a PR box) would need to be used in order to replicate the results. For an observed violation of the chained inequality, we can place a lower bound on the nonlocal content: $q_{min} \geq 1-I_n$ \cite{nonlocalcontent}. Notably, for the case $n=18$, we obtain $q_{min} = 0.874\pm 0.001$ which represents the most nonlocal correlations ever produced experimentally. For comparison, the previous best bound was $q_{min}=0.782\pm 0.014$ \cite{previous_nonlocal1,previous_nonlocal2} (and if one maximally violates $\SCHSH$ , then $q_{min}=0.41$).

Moreover, following the work of \cite{renato}, we can place bounds on the outcome predictability in a general class of physical theories consistent with the no-signaling principle. While quantum theory predicts that the measurement results are fully random (e.g., one cannot predict locally which output port of the polarizing beam splitter each photon will be detected), there could be a super-quantum theory that could predict better than quantum theory (that is with a probability of success strictly greater than $1/2$) in which port each photon will be detected. This predictive power, represented by the probability $\delta$ of correctly guessing the output port, can be upper bounded from the observed violation of the chained Bell inequality. In our experiments, the best bound is obtained for the case $n=18$, for which we obtain $\delta = 0.5702 \pm 0.0005$ (that is, given any possible extension of quantum theory satisfying the free-choice assumption~\cite{stuart}, the measurement result could be guessed with a probability at most 57\% ), which is the strongest experimental bound (closest to 50\% ) to date; the previous bound was $\delta = 0.6644 \pm 0.0014$ \cite{stuart}.


\begin{table}[t]
\caption{\label{tab:MBell}A table of the measured values from two different Bell inequalities, $M_{3322}$ and $M_{4322}$, as defined in Eq.~\eqref{PRBoxIneq}. 
For these inequalities, correlations from $\mathcal{L}$ and those augmented with the use of a single PR-box (represented as $\mathcal{L}+1PR$) give rise to the same bound. Any measured values above the corresponding bound imply that the data is not only incompatible with Bell-locality, but also with a single use of a PR box.  Instead, two PR boxes must be used to replicate the data.  The approximate quantum mechanical maximums (obtained using the tools of~\cite{hierarchy1,hierarchy2} and~\cite{yc-tools1,yc-tools2}) and the quantum mechanical maximums for two qubits are given as a reference.}
\begin{ruledtabular}
\begin{tabular}{lcc}
Bell inequality & Measured value & Quantum (2-qubit) maximum \\
\hline
$M_{3322} \stackrel{ \mathcal{L}+1PR}{\leq} 6$ & $6.016 \pm 0.0003$ &  6.130 (6.024)\\
$M_{4322}\stackrel{ \mathcal{L}+1PR}{\leq} 7$ & $7.004 \pm 0.0004$ &  7.127 (7.041)\\
\end{tabular}
\end{ruledtabular}
\end{table}

The above results on the chained Bell inequality show that in order to reproduce the measured correlations, nonlocal resources (such as the PR box) must be used in more than $87\%$ of the experimental rounds. While the chained Bell inequality provides an interesting metric of nonlocal content, there are, however, even more nonlocal correlations achievable using two-qubit entangled states, which can provably not be reproduced using a single PR box \cite{brunner}. Interestingly, such correlations can arise only from partially entangled states, since maximally entangled states can always be perfectly simulated using a single PR box \cite{cerf05}. The accuracy of our experimental setup allows for the study of Bell inequalities which require the use of more than a single PR box. Specifically, consider the inequalities from Ref.~\cite{brunner} (for $n=3$ and $n=4$): 
\begin{subequations}\label{PRBoxIneq}
\begin{align} 
\mathcal{M}_{3322} = &E_{11}+E_{12}+E_{13}+E_{21}+E_{22}-E_{23} \\ \nonumber  
						+&E_{31}-E_{32}-E^A_1-E^A_2 -E^B_1 + E^B_2 \\ \nonumber
						&\!\!\!\!\!\!\!\!\!\!\!\!\stackrel{\mathcal{L}+1PR}{\leq} 6,\\
\mathcal{M}_{4322} = &E_{11}+E_{12}+E_{13}+E_{21}-E_{23}+E_{24} \\ \nonumber  
						+ &E_{31}-E_{32}-E_{34}-E^A_1-E^A_2 -E^A_3 - E^B_1 \\ \nonumber
						&\!\!\!\!\!\!\!\!\!\!\!\!\stackrel{\mathcal{L}+1PR}{\leq} 7,
\end{align}
\end{subequations}
which cannot be violated by any local correlations supplemented by a single maximally nonlocal PR box ($\mathcal{L}+1PR$), which is viewed as a unit of nonlocality. Nevertheless, by performing well-chosen measurements on a very weakly entangled state ($\ket{\psi_{\approx 3\pi/7}}$), we observed violations of the above inequalities (see Table~\ref{tab:MBell}). Note that since the observed statistics (leading to $\mathcal{M}_{3322} > 6$ and $\mathcal{M}_{4322} > 7$) could not have been obtained using a single PR box, they also cannot be obtained using a maximally entangled state $\ket{\psi_{\pi/4}}$, and required the use of a weakly entangled state (or two PR boxes). Hence, we provide a second experimental verification of the phenomenon of more nonlocality with less entanglement.

Finally, we consider a Bell inequality which can certify the use of complex qubits (versus real qubits) \cite{gisin07}. Specifically, the Bell inequality is given by 
\ba 
& \mathcal{S}_{E} = E_{11}+E_{12}+E_{13}+E_{21}-E_{22}-E_{23} \\ \nonumber  
						& \quad \quad \quad \quad -E_{31}+E_{32}-E_{33}-E_{41}-E_{42}+E_{43} 
						\stackrel{\mathcal{L}}{\leq} 6.
\ea
The optimal quantum violation is $\mathcal{S}_{E}=4 \sqrt{3} \simeq 6.928$, which can be obtained by using a maximally entangled two-qubit state $\ket{\psi_{\pi/4}}$, and a set of highly symmetric qubit measurements. The measurements of Bob are given by three orthogonal vectors on the sphere, and Alice's measurements are given by the four vectors of the tetrahedron: $\vec{a}_1 = \frac{1}{\sqrt{3}}(1,1,1)$, $\vec{a}_2 = \frac{1}{\sqrt{3}}(1,-1,-1)$, $\vec{a}_3 = \frac{1}{\sqrt{3}}(-1,1,-1)$, $\vec{a}_4 = \frac{1}{\sqrt{3}}(-1,-1,1)$, and $\vec{b}_1 = (1,0,0)$, $\vec{b}_2 = (0,1,0)$, $\vec{b}_3 = (0,0,1)$. Implementing this strategy experimentally, we observe a violation of $\mathcal{S}_{E} = 6.890 \pm 0.002$, close to the theoretical value. Interestingly, such a strong violation could not have been obtained using a real qubit strategy. Indeed, the use of measurement settings restricted to an equator of the Bloch sphere, i.e., real qubit measurements, can only provide violations up to $\mathcal{S}_{E} = 2 +2\sqrt{5} \simeq 6.472$ \cite{gisin07}. Note, however, that any strategy involving a single complex qubit measurement can be mapped to an equivalent strategy involving two real qubits \cite{mosca,pal}. Thus, the observed violation certifies the use of complex qubit measurements, i.e., spanning the Bloch sphere, or the use of a higher-dimensional real Hilbert space \cite{FN_dim}.

\section{Conclusion}

To summarize, we have reported the observation of various facets of the rich phenomenon of quantum nonlocality. The results of our systematic experimental investigation of quantum nonlocal correlations are in extremely good agreement with quantum predictions; nevertheless, we believe that pursuing such tests is of significant value, as Bell inequalities are not only fundamental to quantum theory, but also can be used to discuss physics outside of the framework of quantum theory. By doing so, one can continue to place bounds on the features of theories beyond quantum mechanics, as we have here. Such continued experiments investigating the bounds of quantum theory are important, as any valid deviation with quantum predictions, e.g., by observing stronger correlations than predicted by quantum theory, would provide evidence of new physics beyond the quantum model. Furthermore, nonlocality has important applications towards quantum information protocols, though the optimal way to quantify the nonlocality present in a system is still an open question (see, e.g.,~\cite{Bernhard2014}). Here, we experimentally verified, for the first time, that for certain correlations from non-maximally entangled states, two PR boxes (i.e., two units of the nonlocal resource) are required to recreate the correlations from these weakly entangled states. A natural question then is if these systems could be used advantageously for certain quantum information tasks.

\begin{acknowledgments}
This research was supported by the NSF grant No. PHY 12-05870, the Ministry of Education, Taiwan, R.O.C., through ``The Aim for the Top University Project" granted to the National Cheng Kung University (NCKU), the Swiss NCCR-QSIT, the Swiss National Science Foundation (grant PP00P2\_138917 and Starting Grant DIAQ), SEFRI (COST action MP1006).
\end{acknowledgments}

Corresponding author: Bradley Christensen (bgchris2@illinois.edu).

\appendix

\section{Characterization of the quantum boundary}
\label{app:QB}

Here, we discuss in detail the characterization of the boundary of quantum correlations in a 2-dimensional projection of the no-signaling polytope in the case of binary inputs and outputs (i.e., the CHSH scenario). 
Specifically, let us consider the 2-dimensional plane (Fig.~\ref{fig:CHSHcombined}) defined by the expectation values of:
\ba\label{Eq:Axes}
	\SCHSH &=& E_{11}+E_{12}+E_{21}-E_{22}, \\
	\SCHSH' &=& -E_{11}+E_{12}+E_{21}+E_{22}.
\ea
Note that the correlation functions $E_{xy}$ can equivalently be seen as the average value of the product of $\pm1$-outcome local (projective) measurements, i.e., 
\begin{equation}\label{Eq:Exy}
	E_{xy}=\sum_{a,b=\pm1} a\,b\, p(a,b|x,y).
\end{equation}
They can thus be evaluated in quantum theory as $E_{xy}={\rm tr}(\rho\,A_xB_y)$ where
$A_x$, $B_y$ are dichotomic observables satisfying 
\begin{equation}\label{Eq:Constraints}
	A_x^2=B_y^2=\one, \quad [A_x,B_y]=0\quad\forall\quad x,y.
\end{equation}
The boundary of the set of legitimate quantum distributions in this 2-dimensional plane is given by the circle (see also~\cite{Allcock2009,Branciard2011}):
\begin{equation}\label{Eq:Circle}
	\SCHSH^2 + \SCHSH'^2 \leq 8,
\end{equation}
or equivalently
\begin{equation}\label{Eq:Linearized}
	\SCHSH \cos\theta + \SCHSH' \sin\theta \leq 2\sqrt{2} \quad \forall\quad \theta\in\left[0,2\pi\right].
\end{equation}
To see that this is the case, let us note that for any dichotomic observables $A_1$, $A_2$, $B_1$, $B_2$ satisfying Eq.~\eqref{Eq:Constraints},  and any $\theta\in\left[0,2\pi\right]$, the following identity holds true:
\ba \label{Eq:SOS}
	& & 2\sqrt{2}\,\one - \mathcal{B} \\ \nonumber 
	&= &\frac{1}{\sqrt{2}}\left[\sin\left(\frac{\pi}{4}+\theta\right)A_2+\cos\left(\frac{\pi}{4}+\theta\right)A_1-B_1\right]^2\\ \nonumber
	&+&\frac{1}{\sqrt{2}}\left[\cos\left(\frac{\pi}{4}+\theta\right)A_2-\sin\left(\frac{\pi}{4}+\theta\right)A_1+B_2\right]^2
\ea
where $\mathcal{B}$ is the Bell operator~\cite{Braunstein} associated with the Bell expression given in the left-hand-side of Eq.~\eqref{Eq:Linearized}, i.e.,
\ba
	\mathcal{B} &=& \sum_{x,y=1}^2 \left\{\left[\cos\theta(-1)^{(x-1)(y-1)} +\sin\theta (-1)^{xy} \right]A_xB_y\right\}, 
	\nonumber \\
	 &=&\sqrt{2}\Big[\sin\left(\frac{\pi}{4}+\theta\right)(A_1B_2+A_2B_1)\nonumber\\
	 & & \qquad\,\,+\cos\left(\frac{\pi}{4}+\theta\right)(A_1B_1-A_2B_2)\Big].
\ea
Since the right-hand-side of Eq.~\eqref{Eq:SOS} is a sum of non-negative operators, it then follows that for any quantum state $\rho$, and hence for any quantum correlation, we must have
\begin{equation}
	2\sqrt{2}-\Tr(\rho \, \mathcal{B})=2\sqrt{2}-(\SCHSH \cos\theta + \SCHSH' \sin\theta) \geq 0.
\end{equation}
The above bound on the set of quantum correlations is indeed achievable. Explicitly, for each $\theta$, by measuring the following observables:
\begin{gather}
	A_1=\sigma_x, \quad A_2=\sigma_z,\\
	B_1=-\sin\chi\,\sigma_z-\cos\chi\,\sigma_x,\quad 
	B_2=\cos \chi\,\sigma_z-\sin\chi\,\sigma_x,\nonumber
\end{gather}
with $\chi=\theta-\frac{3\pi}{4}$ on the maximally entangled two-qubit state
\begin{equation}
	\ket{\psi_{\pi/4}}=\frac{1}{\sqrt{2}}(\ket{H,H}+\ket{V,V})
\end{equation}
[with $\ket{H}$ ($\ket{V}$) being the $+1 (-1)$ eigenstate of $\sigma_z$], 
we arrive at quantum correlations that saturate the inequality given in Eq.~\eqref{Eq:Linearized}. By 
varying $\theta$ over the entire interval $[0,2\pi]$, it can be verified that one indeed generates the entire (circular) boundary of the quantum set in this 2-dimensional plane.

\section{Upper-bounding quantum violation by maximally entangled states }
\label{App:MES}

Let us denote by $\tau_{\mbox{\tiny Cr}}=\frac{1}{2}+\frac{1}{\sqrt{2}}\approx 1.2071$ and by $\ket{\Phi^+_{d}}$ the maximally entangled state of local Hilbert space dimension $d$, i.e., 
\begin{equation}\label{Eq:MES}
	\ket{\Phi^+_{d}}=\frac{1}{\sqrt{d}}\sum_{i=0}^{d-1}\ket{i}\ket{i},
\end{equation}
where $d\ge 2$. Here, we provide further details showing the following observation.
\begin{observation}\label{Thm:TiltedCHMES}
For $ \tau\in[\tau_{\mbox{\tiny Cr}}, \frac{3}{2}]$, the family of Bell inequalities given by Eq.~\eqref{Ineq:TiltedCH}  cannot be violated by any finite-dimensional maximally entangled state $\ket{\Phi^+_{d}}$.
\end{observation}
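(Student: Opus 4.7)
The plan is to reduce the problem to a convex combination of contributions from invariant subspaces of dimension one or two, via Jordan's lemma, and then verify the bound on each block directly.

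First, I would use the identity $\bra{\Phi^+_d} A \otimes B \ket{\Phi^+_d} = \frac{1}{d}\Tr(A B^T)$ to rewrite every correlator and marginal entering $\Stau$ as a single-space trace on $\CC^d$. I would then apply Jordan's lemma to the pair of Hermitian involutions $(A_1, A_2)$ on Alice's side, obtaining a decomposition $\CC^d = \bigoplus_k V_k$ with $\dim V_k \in \{1,2\}$ invariant under both $A_1$ and $A_2$. Since only the diagonal blocks of $B_y^T$ with respect to this decomposition contribute to the traces $\Tr(A_x B_y^T)$ and $\Tr(B_y^T)$, the Bell quantity splits as $\Stau = \sum_k p_k\,\Stau^{(k)}$ with weights $p_k = \dim V_k / d$, so it suffices to bound $\Stau^{(k)}$ on each block type. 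On a $1$-dim block, $A_x|_k \in \{\pm 1\}$ and $(B_y^T)_{kk} \in [-1,1]$ produce deterministic-then-probabilistic (hence local) correlations, so $\Stau^{(k)} \leq 2(2\tau-1)$ automatically.

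The technical crux is the $2$-dim case. There, irreducibility forces $A_x|_k = \vec a_x \cdot \vec\sigma$ with $|\vec a_x|=1$ (both traceless), while the diagonal block of $B_y^T$ is a general $2\times 2$ Hermitian contraction, which I would parametrize as $\alpha_y\one + \vec b_y\cdot\vec\sigma$ subject to the spectral-norm constraint $|\alpha_y| + |\vec b_y| \leq 1$. A short computation then gives
\begin{equation*}
  \Stau^{(k)} = (\vec a_1 + \vec a_2)\cdot\vec b_1 + (\vec a_1 - \vec a_2)\cdot\vec b_2 + 2(1-\tau)\,\alpha_1.
\end{equation*}
Cauchy--Schwarz bounds the CHSH-like part by $2\sqrt{|\vec b_1|^2 + |\vec b_2|^2}$; setting $\alpha_2 = 0$, $|\vec b_2| = 1$, and $\alpha_1 = -t$ with $t\in[0,1]$ (favorable since $\tau>1$), the remaining problem reduces to maximizing $f(t) = 2\sqrt{(1-t)^2+1} + 2(\tau-1)t$ on $[0,1]$. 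A direct calculation shows $f$ is strictly convex, so its maximum is attained at an endpoint, yielding $\max(2\sqrt 2,\,2\tau)$. For $\tau \geq \tau_{\mbox{\tiny Cr}} = \tfrac{1}{2}+\tfrac{1}{\sqrt 2}$ one checks both $2\sqrt 2 \leq 2(2\tau-1)$ (with equality at $\tau_{\mbox{\tiny Cr}}$) and $2\tau \leq 2(2\tau-1)$ (since $\tau>1$), so $\Stau^{(k)} \leq 2(2\tau-1)$ on every $2$-dim block; the convex combination over blocks then gives the claim.

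The step I expect to be the main obstacle is realizing that $B_y^T$ need not be block-diagonal in Alice's Jordan basis, so its diagonal blocks are general Hermitian contractions rather than genuine $\pm 1$ observables. Extracting the correct constraint $|\alpha_y| + |\vec b_y| \leq 1$ from the spectral-norm bound is precisely what makes $f$ convex and forces the two endpoint values $2\sqrt 2$ and $2\tau$ to be the only candidates. The critical threshold $\tau_{\mbox{\tiny Cr}}$ then emerges as the value at which the endpoint $2\sqrt 2$ (corresponding to traceless diagonal blocks for Bob) saturates the local bound $2(2\tau-1)$.
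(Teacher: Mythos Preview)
Your argument is correct and constitutes an analytic proof, which is genuinely different from what the paper does. The paper first reduces to the single endpoint $\tau=\tau_{\mbox{\tiny Cr}}$ by observing that $\Stau$ is a convex combination of $\mathcal{S}_{\tau_{\mbox{\tiny Cr}}}$ and $\mathcal{S}_{3/2}$ (the latter being trivially satisfied), and then certifies non-violation at $\tau_{\mbox{\tiny Cr}}$ by solving a semidefinite program from the hierarchy of Lang--V\'ertesi--Navascu\'es that characterizes correlations attainable by $\ket{\Phi^+_d}$; the outcome is a numerical certificate good to about $10^{-8}$. Your route instead exploits the two-setting structure on Alice's side via Jordan's lemma together with the ``Choi'' identity $\bra{\Phi^+_d}A\otimes B\ket{\Phi^+_d}=\tfrac{1}{d}\Tr(AB^T)$, reducing everything to a single-variable convex optimization on each $2$-dimensional block. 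The key technical point you flag---that the compressed blocks $(B_y^T)_{kk}$ are only Hermitian contractions, yielding the constraint $|\alpha_y|+|\vec b_y|\le 1$---is exactly what is needed, and your endpoint analysis of $f(t)=2\sqrt{(1-t)^2+1}+2(\tau-1)t$ correctly recovers $\tau_{\mbox{\tiny Cr}}$ as the threshold where $2\sqrt{2}=2(2\tau-1)$.

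What each approach buys: your argument is fully analytic, avoids any solver, and makes the role of $\tau_{\mbox{\tiny Cr}}$ transparent; it also handles all $\tau\in[\tau_{\mbox{\tiny Cr}},\tfrac32]$ in one shot without the convexity reduction. The paper's SDP approach, while computer-assisted, generalizes immediately to inequalities with more settings (indeed the authors use the same machinery for $\mathcal{M}_{3322}$ and $\mathcal{M}_{4322}$), where a Jordan-lemma reduction is no longer available.
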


\begin{proof}
Let us first note that for arbitrary $\tau\in[\tau_{\mbox{\tiny Cr}},\frac{3}{2}]$, the Bell inequality $\Stau$ can be written as a convex combination of that for $\tau=\tau_{\mbox{\tiny Cr}}$ and that for $\tau=\frac{3}{2}$. Moreover, for any given quantum state $\rho$ (and given experimental scenario), it is easy to show that the set of Bell inequalities {\em satisfied} by $\rho$ is a convex set. Since $\Stau$ for $\tau=\frac{3}{2}$ cannot be violated by any legitimate probability distribution~\cite{liang11},  it suffices to show that $\Stau$  for $\tau=\tau_{\mbox{\tiny Cr}}$ also cannot be violated by $\ket{\Phi^+_{d}}$ (for any finite $d$).

To show that {\em no} finite-dimensional $\ket{\Phi^+_{d}}$  can violate the  Bell inequality $\Stau$ for $\tau=\tau_{\mbox{\tiny Cr}}$, we  make use of the hierarchy of semidefinite programs (SDPs) considered in  Ref.~\cite{Lang:JPA:424029} for characterizing {\em exactly} the quantum correlations achievable by $\ket{\Phi^+_{d}}$.  Specifically, to obtain (an upper bound on) the maximal value of $\mathcal{S}_{\tau=\tau_{\mbox{\tiny Cr}}}$ attainable by finite-dimensional $\ket{\Phi^+_{d}}$, it suffices to consider a fixed level of the hierarchy, and solve an SDP over the positive semidefinite (moment) matrix variable $\Gamma$ such that  (1) certain entries of $\Gamma$ are required to be non-negative, (2) certain entries of $\Gamma$ are required to be identical, and (3) a particular entry of $\Gamma$ is required to be ``1" (for details, see page 16-17 of Ref.~\cite{Lang:JPA:424029}).

To this end, we consider $\Gamma$ defined by the 17 symbolic operators $\mathbbm{1}$, $A^+_x$, $B^+_y$, $A^+_xB^+_y$, $B^+_yA^+_x$, $B^+_yB^+_{y'}$, $A^+_xA^+_{x'}$, with $x, x', y, y'\in{1,2}$ and $x\neq x'$, $y \neq y'$.
Solving the corresponding SDP via the solver ``sedumi" (interfaced through YALMIP~\cite{yalmip}), we found that the quantum value of $\mathcal{S}_{\tau=\tau_{\mbox{\tiny Cr}}}$ attainable by any finite-dimensional $\ket{\Phi^+_{d}}$ is upper bounded by $2(2\tau_{\mbox{\tiny Cr}}-1)+\epsilon$ with $\epsilon\approx1.09\times10^{-8}$, which is vanishing within the numerical precision of the solver. In other words, after accounting for the numerical error present in the optimization problem, the output of the SDP provides a numerical certificate that no finite-dimensional maximally entangled state can violate the  Bell inequality $\Stau$ for $\tau=\tau_{\mbox{\tiny Cr}}$. This completes the proof of Observation~\ref{Thm:TiltedCHMES}.
\end{proof}

Note that as $\tau$ increases, the equality $\Stau=2(2\tau-1)$, cf.  Eq.~\eqref{Ineq:TiltedCH} corresponds to a plane tilting from the horizontal axis towards the vertical axis at $-E^A_1-E^B_1=2$, see Fig.~\ref{fig:Tiltedcombined}(a). For $\tau=\tau_{\mbox{\tiny Cr}}$, this plane corresponds precisely to the red solid straight line joining the points $(0,2\sqrt{2})$ and $(2,0)$. Observation~\ref{Thm:TiltedCHMES} thus translates to the fact that, in Fig.~\ref{fig:Tiltedcombined}(a), all correlations present in the region between the actual quantum boundary (green curve) and the red solid line are unattainable by finite-dimensional maximally entangled states --- a fact that can also be independently verified by solving an analogous SDP which maximizes the value of $\SCHSH$ under the equality constraint that marginal correlations $-E^A_1-E^B_1$ take on specific values in the interval $[0,2]$.

Finally, let us note that the same numerical technique could also be used to show that both the $M_{3322}$ and the $M_{4322}$ inequality hold true (to within a numerical precision of $10^{-8}$) for all correlations arising from maximally entangled state of any Hilbert space dimension.

\section{Detailed data and estimated states}
\label{App:Data}

In this section, we give the results of the analyzed data and quantum states used for each Bell test presented in this paper. First, for the data collected for the projection onto the $\SCHSH \cos{\theta}$ and $\SCHSH' \sin{\theta}$ axes, we used maximally entangled states, altering the measurement settings to rotate around the circle in Fig.~\ref{fig:CHSHcombined}.  Here, we collected data for 1s at each setting (where each point requires 16 total measurement combinations).

For the plot of the tilted Bell inequality in Fig.~\ref{fig:Tiltedcombined} [Eq.~\eqref{Ineq:TiltedCH}], we collected data for 15 s for each setting (again, 16 total measurement setting combinations).  We used states of varying degree of entanglement, which we cite by listing the $\theta$ value in the state $\cos{\theta}|H,H\rangle + \sin{\theta}|V,V\rangle$. The analyzed data is displayed in Table \ref{tab:tilteddata}.  As a note, the value in the text listed for $\mathcal{S}_{\tau = 1.3}$ had separately optimized settings (instead of automatically generated settings), as well as was measured for 100 s.

\begin{table}
\caption{\label{tab:tilteddata}Analyzed data and estimated parameters for the tilted Bell inequality [Eq.~\eqref{Ineq:TiltedCH}]. Here, the estimate of the uncertainty of $\Stau$ is given by $\Delta \Stau$.}
\begin{ruledtabular}
\begin{tabular}{cccccccc}
$\tau$ & $\SCHSH$ & $- E_1 - E_2$ & $\Stau$ & $\Delta \Stau$ & Local bound & $\theta$\\
\hline

    1.001  &  2.828  &  0.052  &  2.828  &  0.011  &  2.004  &  45.0\\
    1.020  &  2.827  &  0.120  &  2.837  &  0.010  &  2.080  &  46.5\\
    1.039  &  2.816  &  0.220  &  2.833  &  0.010  &  2.156  &  48.1\\
    1.063  &  2.800  &  0.320  &  2.840  &  0.010  &  2.252  &  49.8\\
    1.095  &  2.764  &  0.480  &  2.855  &  0.009  &  2.380  &  52.2\\
    1.128  &  2.736  &  0.620  &  2.895  &  0.009  &  2.512  &  54.9\\
    1.137  &  2.712  &  0.720  &  2.909  &  0.008  &  2.548  &  55.5\\
    1.171  &  2.660  &  0.860  &  2.954  &  0.008  &  2.684  &  58.5\\
    1.193  &  2.616  &  0.980  &  2.994  &  0.007  &  2.772  &  60.2\\
    1.223  &  2.564  &  1.120  &  3.064  &  0.007  &  2.892  &  62.7\\
    1.250  &  2.504  &  1.240  &  3.124  &  0.006  &  3.000  &  65.2\\
    1.265  &  2.456  &  1.320  &  3.156  &  0.006  &  3.060  &  66.4\\
    1.296  &  2.368  &  1.460  &  3.232  &  0.005  &  3.184  &  69.2\\
    1.323  &  2.304  &  1.580  &  3.325  &  0.005  &  3.292  &  71.7\\
    1.348  &  2.228  &  1.680  &  3.397  &  0.004  &  3.392  &  74.2\\
    1.369  &  2.168  &  1.760  &  3.467  &  0.003  &  3.476  &  76.3\\
    1.390  &  2.120  &  1.820  &  3.540  &  0.003  &  3.560  &  78.4\\
    1.410  &  2.064  &  1.880  &  3.606  &  0.002  &  3.640  &  80.5\\
    1.424  &  2.036  &  1.920  &  3.664  &  0.002  &  3.696  &  81.9\\
    1.435  &  2.016  &  1.932  &  3.697  &  0.002  &  3.740  &  82.9\\
    1.442  &  2.000  &  1.946  &  3.721  &  0.002  &  3.768  &  83.7\\
    1.449  &  1.964  &  1.957  &  3.722  &  0.002  &  3.796  &  84.6\\
\end{tabular}
\end{ruledtabular}
\end{table}

For the chained Bell inequality [Eq.~\eqref{chainedineq}], $I_n$ is the chained Bell parameter, with $\nu_n$ being the measurement bias.  The measurement bias is the deviation of Alice's (or Bob's) individual measurements from being completely random, that is, the difference in probability of measuring output -1 to measuring output 1 (calculated by $p(1|x)-p(-1|x)$). The bias given in Table~\ref{tab:chaineddata} (and the bias used in calculating $\delta_n$) is the maximum bias over all possible measurement settings.  Finally, $\delta_n$ is the bound on the predictive power, with $\Delta \delta_n$ being the uncertainty.  The uncertainty of $I_n$ is approximately twice as large as the $\Delta \delta_n$ (since $\delta_n \propto I_n/2$).  For these measurements, we used a maximally entangled state, and collected data for 5 s at each measurement setting, except from $n = 18$ to $n = 21$, where we collected for 20 s at each setting, as the first scan through all values of $n$ showed the lowest value in that region.  The analyzed data for the chained Bell inequality is shown in Table \ref{tab:chaineddata}.
\begin{table}
\caption{\label{tab:chaineddata}Analyzed data for the chained Bell inequality [Eq.~\eqref{chainedineq}].}
\begin{ruledtabular}
\begin{tabular}{ccccc}
$n$ & $I_n$ & $\nu_n$ & $\delta_n$ & $\Delta \delta_n$\\
\hline
    2  &  0.5931  &  0.0062  &  0.8028  &  0.0016\\
    3  &  0.4115  &  0.0058  &  0.7116  &  0.0014\\
    4  &  0.3148  &  0.0055  &  0.6629  &  0.0013\\
    5  &  0.2624  &  0.0068  &  0.6380  &  0.0012\\
    6  &  0.2230  &  0.0058  &  0.6173  &  0.0012\\
    7  &  0.1965  &  0.0065  &  0.6048  &  0.0011\\
    8  &  0.1812  &  0.0059  &  0.5964  &  0.0011\\
    9  &  0.1667  &  0.0073  &  0.5906  &  0.0011\\
   10  &  0.1539  &  0.0066  &  0.5836  &  0.0011\\
   11  &  0.1479  &  0.0069  &  0.5809  &  0.0011\\
   12  &  0.1419  &  0.0069  &  0.5778  &  0.0011\\
   13  &  0.1396  &  0.0065  &  0.5763  &  0.0011\\
   14  &  0.1357  &  0.0064  &  0.5742  &  0.0011\\
   15  &  0.1324  &  0.0077  &  0.5739  &  0.0010\\
   16  &  0.1312  &  0.0061  &  0.5718  &  0.0010\\
   17  &  0.1294  &  0.0064  &  0.5711  &  0.0010\\
   18  &  0.1262  &  0.0065  &  0.5702  &  0.0005\\
   19  &  0.1318  &  0.0070  &  0.5714  &  0.0005\\
   20  &  0.1290  &  0.0075  &  0.5722  &  0.0005\\
   21  &  0.1279  &  0.0074  &  0.5709  &  0.0005\\
   22  &  0.1291  &  0.0071  &  0.5717  &  0.0010\\
   23  &  0.1287  &  0.0065  &  0.5708  &  0.0010\\
   24  &  0.1325  &  0.0072  &  0.5734  &  0.0010\\
   25  &  0.1312  &  0.0074  &  0.5730  &  0.0010\\
   26  &  0.1380  &  0.0067  &  0.5757  &  0.0011\\
   27  &  0.1372  &  0.0070  &  0.5755  &  0.0010\\
   28  &  0.1389  &  0.0073  &  0.5768  &  0.0011\\
   29  &  0.1409  &  0.0073  &  0.5777  &  0.0011\\
   30  &  0.1429  &  0.0069  &  0.5783  &  0.0011\\
   31  &  0.1456  &  0.0075  &  0.5803  &  0.0011\\
   32  &  0.1474  &  0.0066  &  0.5803  &  0.0011\\
   33  &  0.1475  &  0.0070  &  0.5808  &  0.0011\\
   34  &  0.1506  &  0.0083  &  0.5836  &  0.0011\\
   35  &  0.1547  &  0.0073  &  0.5846  &  0.0011\\
   36  &  0.1573  &  0.0066  &  0.5853  &  0.0011\\
   37  &  0.1577  &  0.0081  &  0.5870  &  0.0011\\
   38  &  0.1594  &  0.0072  &  0.5869  &  0.0011\\
   39  &  0.1655  &  0.0072  &  0.5899  &  0.0011\\
   40  &  0.1665  &  0.0070  &  0.5903  &  0.0011\\
   41  &  0.1698  &  0.0073  &  0.5922  &  0.0011\\
   42  &  0.1716  &  0.0065  &  0.5923  &  0.0011\\
   43  &  0.1750  &  0.0067  &  0.5942  &  0.0011\\
   44  &  0.1810  &  0.0069  &  0.5974  &  0.0011\\
   45  &  0.1801  &  0.0079  &  0.5980  &  0.0011\\
\end{tabular}
\end{ruledtabular}
\end{table}

Finally, in Table~\ref{tab:mdata} we list the measurement settings and states for the $M_{3322}$ and $M_{4322}$ Bell inequalities.  The settings are given as the angle in the projection onto the state $\cos{a_i}|H\rangle + \sin{a_i}|V\rangle$ (and similarly for $b_j$). Here, data was collected data for 1200 s at each measurement setting.
\begin{table}
\caption{\label{tab:mdata}Details for the $M_{3322}$ and $M_{4322}$ Bell inequalities [Eqs.~\eqref{PRBoxIneq}].}
\begin{ruledtabular}
\begin{tabular}{ccccccccc}
Inequality & $\theta$ & $a_1$ & $a_2$ & $a_3$ & $a_4$ & $b_1$ & $b_2$ & $b_3$\\
\hline

    $M_{3322}$  & 77.2 & -1.2 & 27.2 & -35.2 & N/A & -0.7 & 9.2 & -20.3\\
    $M_{4322}$  & 76.6 & 0 & 61 & 45 & 119 & 15.6 & 164.3 & 0\\
\end{tabular}
\end{ruledtabular}
\end{table}

\end{document}